\begin{document}
\renewcommand{\citet}{\cite}

\rhead{MIT-CTP-4091}

\title{Breaking and making quantum money: toward a new quantum cryptographic
protocol}

\author{Andrew~Lutomirski$^1$ \and Scott~Aaronson$^{2}$ \and Edward~Farhi$^{1}$ \and David~Gosset$^{1}$ \and Avinatan~Hassidim${^1}$ \and Jonathan~Kelner$^{2,3}$ \and Peter~Shor$^{1,2,3}$}

\address{
$^{1}$Center for Theoretical Physics, Massachusetts Institute of Technology, Cambridge, MA 02139
\and
$^{2}$Computer Science and Artificial Intelligence Laboratory, Massachusetts Institute of Technology, Cambridge, MA 02139
\and
$^{3}$Department of Mathematics, Massachusetts Institute of Technology, Cambridge, MA 02139}
\email{luto@mit.edu \and aaronson@csail.mit.edu \and farhi@mit.edu \and dgosset@mit.edu \and avinatanh@gmail.com \and kelner@mit.edu \and shor@math.mit.edu} \keywords{quantum money; cryptography; random matrices; and markov chains}

\begin{abstract}
Public-key quantum money is a cryptographic protocol in which a bank
can create quantum states which anyone can verify but no one except
possibly the bank can clone or forge. There are no secure public-key
quantum money schemes in the literature; as we show in this paper,
the only previously published scheme \citet{aaronson-quantum-money}
is insecure. We introduce a category of quantum money protocols which
we call \emph{collision-free}. For these protocols, even the bank
cannot prepare multiple identical-looking pieces of quantum money.
We present a blueprint for how such a protocol might work as well
as a concrete example which we believe may be insecure. 
\end{abstract}

\maketitle
\renewcommand{\headrulewidth}{0pt}
\thispagestyle{fancyplain}

\global\long\def\Tr{\mathop\mathrm{Tr}}

\global\long\def\poly{\mathop\mathrm{poly}}

\global\long\def\E{\mathop\mathbb{E}}

\global\long\def\ket#1{|#1\rangle}

\global\long\def\bra#1{\langle#1|}

\newtheorem{definition}{Definition}
\newtheorem{lemma}[definition]{Lemma}
\newtheorem{thm}[definition]{Theorem}

\section{Introduction}

In 1969, Wiesner \citet{wiesner} pointed out that the no-cloning
theorem raises the possibility of uncopyable cash: bills whose authenticity
would be guaranteed by quantum physics.%
\footnote{This is the same paper that introduced the idea of quantum cryptography.
Wiesner's paper was not published until the 1980s; the field of quantum
computing and information (to which it naturally belonged) had not
yet been invented.%
} Here's how Wiesner's scheme works: besides an ordinary serial number,
each bill would contain (say) a few hundred photons, which the central
bank polarized in random directions when it issued the note. The bank
remembers the polarization of every photon on every bill ever issued.
If you want to verify that a bill is genuine, you take it to the bank,
and the bank uses its knowledge of the polarizations to measure the
photons. On the other hand, the No-Cloning Theorem ensures that someone
who \textit{doesn't} know the polarization of a photon can't produce
more photons with the same polarizations. Indeed, copying a bill can
succeed with probability at most $\left(5/6\right)^{n}$, where $n$
is the number of photons per bill.

Despite its elegance, Wiesner's quantum money is a long way from replacing
classical money. The main practical problem is that we don't know
how to reliably store polarized photons (or any other coherent quantum
state) for any appreciable length of time.

Yet, even if we could solve the technological problems, Wiesner's
scheme would still have a serious drawback: only the bank can verify
that a bill is genuine. Ideally, \textit{printing} bills ought to
be the exclusive prerogative of the bank, but the \textit{checking}
process ought to be open to anyone---think of a convenience-store
clerk holding up a \$20 bill to a light.

But, with quantum mechanics, it may be possible to have quantum money
satisfying all three requirements:
\begin{enumerate}
\item The bank can print it. That is, there is an efficient algorithm to
produce the quantum money state.
\item Anyone can verify it. That is, there is an efficient measurement that
anyone can perform that accepts money produced by the bank with high
probability and minimal damage.
\item No one (except possibly the bank) can copy it. That is, no one other
than the bank can efficiently produce states that are accepted by
the verifier with better than exponentially small probability.
\end{enumerate}
We call such a scheme a \textit{public-key quantum money scheme},
by analogy with public-key cryptography. Such a scheme cannot be secure
against an adversary with unbounded computational power, since a brute-force
search will find valid money states in exponential time. Surprisingly,
the question of whether public-key quantum money schemes are possible
under computational assumptions has remained open for forty years,
from Wiesner's time until today.

The first proposal for a public-key quantum money scheme, along with
a proof that such money exists in an oracle model, appeared in \citet{aaronson-quantum-money}.
We show in section~\ref{sec:stabilizer-money-insecure} that the
proposed quantum money scheme is insecure.

In this paper we introduce the idea of \emph{collision-free} quantum
money, which is public-key quantum money with the added restriction
that no one, not even the bank, can efficiently produce two identical-looking
pieces of quantum money. We discuss the prospect of implementing collision-free
quantum money and its uses in section~\ref{sec:kinds-of-money} below.

The question of whether secure public-key quantum money exists remains
open.

\section{Two kinds of quantum money\label{sec:kinds-of-money}}

All public-key quantum money schemes need some mechanism to identify
the bank and prevent other parties from producing money the same way
that the bank does. A straightforward way of accomplishing this is
to have the money consist of a quantum state and a classical description,
digitally signed by the bank, of a circuit to verify the quantum state.
Digital signatures secure against quantum adversaries are believed
to exist, so we do not discuss the signature algorithm in the remainder
of the paper.

Alternatively, if the bank produces a fixed number of quantum money
states, it could publish a list of all the verification circuits of
all the valid money states, and anyone could check that the verifier
of their money state is in that list. This alternative is discussed
further in section~\ref{sub:collision-free-money}.

\subsection{Quantum money with a classical secret}

Public-key quantum money is a state which can be produced by a bank
and verified by anyone. One way to design quantum money is to have
the bank choose, for each instance of the money, a classical secret
which is a description of a quantum state that can be efficiently
generated and use that secret to manufacture the state. The bank then
constructs an algorithm to verify that state and distributes the state
and a description of the algorithm as {}``quantum money.'' We will
refer to protocols of this type as \emph{quantum money with a classical
secret}. The security of such a scheme relies on the difficulty of
deducing the classical secret given both the verification circuit
and a copy of the state.

A simple but insecure scheme for this type of quantum money is based
on random product states. The bank chooses a string of $n$ uniformly
random angles $\theta_{i}$ between $0$ and $2\pi$. This string
is the classical secret. Using these angles, the bank generates the
state $|\psi\rangle=\otimes_{i}|\theta_{i}\rangle$ where $|\theta_{i}\rangle=\cos\theta_{i}|0\rangle+\sin\theta_{i}|1\rangle$
and chooses a set of (say) 4-local projectors which are all orthogonal
to $|\psi\rangle$. The quantum money is the state $|\psi\rangle$
and a classical description of the projectors, and anyone can verify
the money by measuring the projectors.

It is NP-hard to produce the state $|\psi\rangle$ given only a description
of the projectors, and given only the state, the no-cloning theorem
states that the state cannot be copied. However, this quantum money
is insecure because of a fully quantum attack \citet{state-restoration}
that uses a copy of the state and the description of the projectors
to produce additional copies of the state. A more sophisticated example
of quantum money with a classical secret is described in \citet{aaronson-quantum-money}.
A different scheme was proposed Mosca and Stebila in \citet{mosca-2009}.
The latter scheme requires a classical oracle that we do not know
how to construct.

All quantum money schemes which rely on a classical secret in this
way have the property, shared with ordinary bank notes and coins,
that an unscrupulous bank can produce multiple pieces of identical
money. Also, if there is a classical secret, there is the risk that
some classical algorithm can deduce the secret from the verification
algorithm (we show in section~\ref{sec:stabilizer-money-insecure}
that the scheme of \citet{aaronson-quantum-money} fails under some
circumstances for exactly this reason).

\subsection{Collision-free quantum money\label{sub:collision-free-money}}

An alternative kind of quantum money is \emph{collision-free}. This
means that the bank cannot efficiently produce two pieces of quantum
money with the same classical description of the verification circuit.
This rules out protocols in which the verification circuit is associated
with a classical secret which allows the bank to produce the state.
(For example, in the product state construction in the previous section,
the set of angles would allow the bank to produce any number of identical
pieces of quantum money.)

Collision-free quantum money has a useful property that even uncounterfeitable
paper money (if it existed) would not have: instead of just digitally
signing the verification circuit for each piece of money, the bank
could publish a list describing the verification circuit of each piece
of money it intends to produce. These verification circuits would
be like serial numbers on paper money, but, since the bank cannot
cheat by producing two pieces of money with the same serial number,
it cannot produce more money than it says. This means that the bank
cannot inflate the currency by secretly printing extra money.

We expect that computationally secure collision-free quantum money
is possible. We do not have a concrete implementation of such a scheme,
but in the next few sections, we give a blueprint for how a collision-free
quantum money scheme could be constructed. We hope that somebody produces
such a scheme which will not be vulnerable to attack.

\subsubsection{Quantum money by postselection\label{sub:postselection-money}}

Our approach to collision-free quantum money starts with a classical
set. For concreteness, we will take this to be the set of $n$-bit
strings. We need a classical function $L$ that assigns a label to
each element of the set. There should be an exponentially large set
of labels and an exponentially large number of elements with each
label. Furthermore, no label should correspond to more than an exponentially
small fraction of the set. The function $L$ should be as obscure
and have as little structure as possible. The same function can be
used to generate multiple pieces of quantum money. Each piece of quantum
money is a state of the form \[
|\psi_{\ell}\rangle=\frac{1}{\sqrt{N_{\ell}}}\sum_{x\mbox{ s.t. }L\left(x\right)=\ell}|x\rangle\]
 along with the label $\ell$ which is used as part of the verification
procedure ($N_{\ell}$ is the number of terms in the sum). The function
$L$ must have some additional structure in order to verify the state.

Such a state can be generated as follows. First, produce the equal
superposition over all $n$-bit strings. Then compute the function
$L$ into an ancilla register and measure that register to obtain
a particular value $\ell$. The state left over after measurement
will be $|\psi_{\ell}\rangle$.

The quantum money state $|\psi_{\ell}\rangle$ is the equal superposition
of exponentially many terms which seemingly have no particular relationship
to each other. Since no label occurs during the postselection procedure
above with greater than exponentially small probability, the postselection
procedure would have to be repeated exponentially many times to produce
the same label $\ell$ twice. If the labeling function $L$ is a black
box with no additional structure, then Grover's lower bound rules
out any polynomial time algorithm that can produce the state $\ket{\psi_{\ell}}$
given only knowledge of $\ell$. We conjecture that it is similarly
difficult to copy a state $|\psi_{\ell}\rangle$ or to produce the
state $\ket{\psi_{\ell}}\otimes|\psi_{\ell}\rangle$ for any $\ell$
at all. 

It remains to devise an algorithm to verify the money.

\subsubsection{Verification using rapidly mixing Markov chains\label{sub:verification-with-markov-chains}}

The first step of any verification algorithm is to measure the function
$L$ to ensure that the state is a superposition of basis vectors
associated with the correct label $\ell$. The more difficult task
is to verify that it is the correct superposition $|\psi_{\ell}\rangle.$ 

Our verification procedure requires some additional structure in the
function $L$: we assume that we know of a classical Markov matrix
$M$ which, starting from any distribution over bit strings with the
same label $\ell$, rapidly mixes to the uniform distribution over
those strings but does not mix between strings with different $\ell$.
This Markov chain must have a special form: each update must consist
of a uniform random choice over $N$ update rules, where each update
rule is deterministic and invertible. We can consider the action of
the operator $M$ on the Hilbert space in which our quantum money
lives ($M$ is, in general, neither unitary nor Hermitian). Acting
on states in this Hilbert space, any valid quantum money state $\ket{\psi_{\ell}}$
is a +1 eigenstate of $M$ and, in fact, \begin{equation}
M^{r}\approx\sum_{l}|\psi_{\ell}\rangle\langle\psi_{\ell}|\label{eq:M_to_the_r}\end{equation}
 where the approximation is exponentially good for polynomially large
$r$. This operator, when restricted to states with a given label
$\ell$, approximately projects onto the money state $|\psi_{\ell}\rangle$.
After measuring the label $\ell$ as above, the final step of our
verification procedure is to measure $M^{r}$ for sufficiently large
$r$ as we describe below. Even using the Markov chain $M$, we do
not know of a general way to efficiently copy quantum money states
$\ket{\psi_{\ell}}$. 

Any deterministic, invertible function corresponds to a permutation
of its domain; we can write the Markov matrix as the average of $N$
such permutations $P_{i}$ over the state space, where $P_{i}$ corresponds
to the $i^{\text{th}}$ update rule. That is \[
M=\frac{1}{N}\sum_{i=1}^{N}P_{i}.\]

We define a controlled update $U$ of the state, which is a unitary
quantum operator on two registers (the first holds an $n$-bit string
and the second holds numbers from 1 to $N$) \[
U=\sum_{i}P_{i}\otimes\ket i\bra i.\]

Given some initial quantum state on $n$ qubits, we can add an ancilla
in a uniform superposition over all $i$ (from 1 to $N$). We then
apply the unitary $U$, measure the projector of the ancilla onto
the uniform superposition, and discard the ancilla. The Kraus operator
sum element corresponding to the outcome 1 is \begin{align*}
 & \quad\left(I\otimes\frac{1}{\sqrt{N}}\sum_{i=1}^{N}\bra i\right)U\left(I\otimes\frac{1}{\sqrt{N}}\sum_{i=1}^{N}\ket i\right)\\
 & =\frac{1}{N}\sum_{i=1}^{N}P_{i}\\
 & =M.\end{align*}
This operation can be implemented with one call to controlled-$P_{i}$
and additional overhead logarithmic in $N$. Repeating this operation
$r$ times, the Kraus operator corresponding to all outcomes being
1 is $M^{r}$. The probability that all of the outcomes are 1 starting
from a state $\ket{\phi}$ is $\left\Vert M^{r}|\phi\rangle\right\Vert ^{2}$
and the resulting state is $M^{r}|\phi\rangle/\left\Vert M^{r}|\phi\rangle\right\Vert ^{2}$.
If choose a large enough number of iterations $r$, we approximate
a measurement of $\sum_{l}|\psi_{\ell}\rangle\langle\psi_{\ell}|$
as in eq.~\ref{eq:M_to_the_r}.

This construction has the caveat that, if the outcomes are not all
1, the final state is not $(1-M^{r})\ket{\psi}$. This can be corrected
by deferring all measurements, computing an indicator of whether all
outcomes were 1, and uncomputing everything else, but, as we do not
care about the final state of bad quantum money, we do not need this
correction.

\subsection{An example of quantum money by postselection}

\subsubsection{Constructing a label function}

One approach to creating the labeling function $L$ from Sec.~\ref{sub:postselection-money}
is to concatenate the output of multiple single-bit classical cryptographic
hash functions,%
\footnote{A simpler apprach would be to hash the entire $n$-bit string onto
a smaller, but still exponentially large, set of labels. We do not
pursue this approach because we do not know of any way to verify the
resulting quantum money states.%
} each of which acts on some subset of the qubits in the money state.
We will describe such a scheme in this section, which has promising
properties but is most likely insecure.

We start by randomly choosing $\left\lceil \sqrt{n}\right\rceil $
subsets of the $n$ bits, where each bit is in 10 of the subsets.
We associate a different binary valued hash function with each subset.
The hash function associated with a particular subset maps the bits
in that subset to either 0 or 1. The labeling function $L$ is the
$\left\lceil \sqrt{n}\right\rceil $-bit string which contains the
outputs of all the hash functions.

The bank can produce a random pair $\left(\ell,|\psi_{\ell}\rangle\right)$,
where $|\psi_{\ell}\rangle$ is the uniform superposition of all bit
strings that hash to the values corresponding to the label $\ell$,
by using the algorithm in Sec.~\ref{sub:postselection-money}.

\subsubsection{Verifying the Quantum Money}

As in Sec.~\ref{sub:verification-with-markov-chains}, we verify
the money using a Markov chain. The update rule for the Markov chain
is to choose a bit at random and flip the bit if and only if flipping
that bit would not change the label (i.e. if all of the hash function
that include that bit do not change value, which happens with roughly
constant probability). This Markov chain is not ergodic, because there
are probably many assignments to all the bits which do not allow any
of the bits to be flipped. These assignments, along with some other
possible assignments that mix slowly, can be excluded from the superposition,
and the verification circuit may still be very close to a projector
onto the resulting money state.

\subsubsection{A weakness of this quantum money}

A possible weakness of our hash-based labeling function as defined
above is that the label is not an opaque value---the labels of two
different bit strings are related to the difference between those
strings. Specifically, the problem of finding strings that map to
a particular label $\ell$ is a constraint satisfaction problem, and
the Hamming distance between the label $\ell'=L\left(x\right)$ and
$\ell$ is the number of clauses that the string $x$ violates.

We are concerned about the security of this scheme because it may
be possible to use the structure of the labeling function to implement
algorithms such as the state generation algorithm in \citet{adiabatic-state-generation},
which, under certain circumstances, could be used to produce the money
state. For example, consider a thermal distribution for which each
bit string has probability proportional to $e^{-\beta c\left(x\right)}$,
where $\beta$ is an arbitrary constant and $c\left(x\right)$ is
the number of clauses that the string $x$ violates. If for all $\beta$
we could construct a rapidly mixing Markov chain with this stationary
distribution, then we could apply the state generation algorithm mentioned
above. A naive Metropolis-Hastings construction that flips single
bits gives Markov chains that are not rapidly mixing at high $\beta$,
but some variants may be rapidly mixing. We do not know whether quantum
sampling algorithms based on such Markov chains can run in polynomial
time.

Due to this type of attack, and because we do not have a security
proof, we do not claim that this money is secure.

\section{Insecurity of a previously published quantum money scheme\label{sec:stabilizer-money-insecure}}

The only currently published public-key quantum money scheme, an example
of quantum money with a classical secret, was proposed in \citet{aaronson-quantum-money}.
We refer to this scheme as stabilizer money. We show that stabilizer
money is insecure by presenting two different attacks that work in
different parameter regimes. For some parameters, a classical algorithm
can recover the secret from the description of the verification circuit.
For other parameters, a quantum algorithm can generate states which
are different from the intended money state but which still pass verification
with high probability. Neither attack requires access to the original
money state.

The stabilizer money is parametrized by integers $n,$ $m$ and $l$
and by a real number $\epsilon\in[0,1]$. These parameters are required
to satisfy $\frac{1}{\epsilon^{2}}\ll l$.

The quantum money state is a tensor product of $l$ different stabilizer
states, each on $n$ qubits, and the classical secret is a list of
Pauli group operators which stabilize the state. The bank generates
an instance of the money by choosing a random stabilizer state for
each of the $l$ registers. To produce the verification circuit, the
bank generates an $m\times l$ table of $n$ qubit Pauli group operators.
The $(i,j)$th element of the table is an operator \[
E_{ij}=(-1)^{b_{ij}}A_{1}^{ij}\otimes A_{2}^{ij}...\otimes A_{n}^{ij}\]
 where each $A_{k}^{ij}\in\{1,\sigma_{x},\sigma_{y},\sigma_{z}\}$
and $b_{ij}\in\{0,1\}$. Each element $E_{ij}$ of the table is generated
by the following procedure: 
\begin{enumerate}
\item With probability $1-\epsilon$ choose the $b_{ij}$ and, for each
$k$, $A_{k}^{ij}$ uniformly at random.
\item With probability $\epsilon$ choose the operator $E_{ij}$ to be a
uniformly random element of the stabilizer group of $|C_{i}\rangle$.
\end{enumerate}
To verify the quantum money state, for each $i$ the authenticator
chooses $j\left(i\right)\in\left[m\right]$ at random and measures
\begin{equation}
Q=\frac{1}{l}\sum_{i}I^{\otimes i-1}\otimes E_{i,j\left(i\right)}\otimes I^{\otimes m-i}.\label{eq:operatorQ}\end{equation}
 The authenticator accepts iff the outcome is greater than or equal
to $\frac{\epsilon}{2}$. Note that measuring the operator $Q$ is
equivalent to measuring the operator $E_{i,j(i)}$ for each register
$i\in[l]$ and then averaging the results, since the measurements
on different registers commute.

The state $|C_{1}\rangle|C_{2}\rangle...|C_{l}\rangle$ is accepted
by this procedure with high probability since the probability of measuring
a $+1$ for the operator $E_{i,j(i)}$ on the state $|C_{i}\rangle$
is $\frac{1+\epsilon}{2}$ . The mean value of the operator $Q$ in
the state $|C_{1}\rangle|C_{2}\rangle...|C_{l}\rangle$ is therefore
$\epsilon$, since it is simply the average of the $E_{i,j(i)}$ for
each register $i\in[l]$. The parameter $l$ is chosen so that $\frac{l}{\epsilon^{2}}=\Omega\left(n\right)$
so the probability that one measures $Q$ to be less than $\frac{\epsilon}{2}$
is exponentially small in $n$.

Our attack on this money depends on the parameter $\epsilon$. Our
proofs assume that $m=\poly(n)$, but we expect that both attacks
work beyond the range in which our proofs apply.

\subsection{Attacking the verification circuit for $\epsilon\leq\frac{1}{16\sqrt{m}}$ }

For $\epsilon\leq\frac{1}{16\sqrt{m}}$ and with high probability
over the table of Pauli operators, we can efficiently generate a state
that passes verification with high probability. This is because the
verification algorithm does not project onto the intended money state
but in fact accepts many states with varying probabilities. On each
register, we want to produce a state for which the expected value
of the measurement of a random operator from the appropriate column
of $E$ is sufficiently positive. This is to ensure that, with high
probability, the verifier's measurement of $Q$ will have an outcome
greater than $\frac{\epsilon}{2}$. For small $\epsilon,$ there are
many such states on each register and we can find enough of them by
brute force.

We find states that pass verification by working on one register at
a time. For each register $i$, we search for a state $\rho_{i}$
with the property that \begin{equation}
\Tr\left[\left(\frac{1}{m}\sum_{j=1}^{m}E_{ij}\right)\rho_{i}\right]\geq\frac{1}{4\sqrt{m}}+O\left(\frac{1}{m^{2}}\right).\label{eq:bruteforce-low-epsilon-cond}\end{equation}
 As we show in Appendix~\ref{sec:details-of-low-epsilon-attack},
we can find such states efficiently on enough of the registers to
construct a state that passes verification.

\subsection{Recovering the classical secret for $\epsilon\geq\frac{c}{\sqrt{m}}$}

We describe how to recover the classical secret (i.e. a description
of the quantum state), and thus forge the money, when the parameter
$\epsilon\geq\frac{c}{\sqrt{m}}$ for any constant $c>0$. We observe
that each column of the table $E$ contains approximately $\epsilon m$
commuting operators, with the rest chosen randomly, and if, in each
column, we can find a set of commuting operators that is at least
as large as the planted set, then any quantum state stabilized by
these operators will pass verification.

We begin by casting our question as a graph problem. For each column,
let $G$ be a graph whose vertices correspond to the $m$ measurements,
and connect vertices $i$ and $j$ if and only if the corresponding
measurements commute. The vertices corresponding to the planted commuting
measurements now form a clique, and we aim to find it.

In general, it is intractable to find the largest clique in a graph.
In fact, it is NP-hard even to approximate the size of the largest
clique within $n^{1-\epsilon}$, for any $\epsilon>0$~\citet{Zuckerman}.
Finding large cliques planted in otherwise random graphs, however,
can be easy.

For example, if $\epsilon=\Omega\left(\frac{\log m}{\sqrt{m}}\right)$,
then a simple classical algorithm will find the clique. This algorithm
proceeds by sorting the vertices in decreasing order of degree and
selecting vertices from the beginning of the list as long as the selected
vertices continue to form a clique.

We can find the planted clique for $\epsilon\geq\frac{c}{\sqrt{m}}$
for any constant $c>0$ in polynomial time using a more sophisticated
classical algorithm that may be of independent interest. If the graph
were obtained by planting a clique of size $\epsilon\sqrt{m}$ in
a random graph drawn from $G(m,1/2)$, Alon, Krivelevich, and Sudakov
showed in \citet{AKS98} that one can find the clique in polynomial
time with high probability.%
\footnote{Remember that $G\left(m,p\right)$ is the Erdös-Rényi distribution
over $m$-vertex graphs in which an edge connects each pair of vertices
independently with probability $p$. The AKS algorithm was later improved
\citep{FK00} to work on subgraphs of $G(n,p)$ for any constant $p$,
but our measurement graph $G$ is not of that form. %
} Unfortunately, the measurement graph $G$ is not drawn from $G(m,1/2)$,
so we cannot directly apply their result. However, we show in appendix\textbf{~\ref{sec:details-of-low-epsilon-attack}}
that if $G$ is sufficiently random then a modified version of their
algorithm works.

\section{Conclusions}

Quantum money is an exciting and open area of research. Wiesner's
original scheme is information-theoretically secure, but is not public-key.
In this paper, we proved that the stabilizer construction for public-key
quantum money \citet{aaronson-quantum-money} is insecure for most
choices of parameters, and we expect that it is insecure for all choices
of parameters. We drew a distinction between schemes which use a classical
secret and those which are collision-free. We gave a blueprint for
how a collision-free scheme might be devised. We described an illustrative
example of such a scheme, but we have serious doubts as to its security.

It remains a major challenge to base the security of a public-key
quantum money scheme on any previously-studied (or at least standard-looking)
cryptographic assumption, for example, that some public-key cryptosystem
is secure against quantum attack. Much as we wish it were otherwise,
it seems possible that public-key quantum money intrinsically requires
a new mathematical leap of faith, just as public-key cryptography
required a new leap of faith when it was first introduced in the 1970s.

\section{Acknowledgments}

This work was supported in part by funds provided by the U.S. Department
of Energy under cooperative research agreement DE-FG02-94ER40818,
the W. M. Keck Foundation Center for Extreme Quantum Information Theory,
the U.S. Army Research Laboratory's Army Research Office through grant
number W911NF-09-1-0438, the National Science Foundation through grant
numbers CCF-0829421, CCF-0843915, and CCF-0844626, a DARPA YFA grant,
the NDSEG fellowship, the Natural Sciences and Engineering Research
Council of Canada, and Microsoft Research.

\bibliographystyle{hplain}
\addcontentsline{toc}{section}{\refname}\bibliography{money-bib}

\appendix

\section{Details of the attack against stabilizer money for $\epsilon\leq\frac{1}{16\sqrt{m}}$
\label{sec:details-of-low-epsilon-attack}}

For $\epsilon\leq\frac{1}{16\sqrt{m}}$ and with high probability
in the table of Pauli operators, we can efficiently generate a state
that passes verification with high probability. Our attack may fail
for some choices of the table used in verification, but the probability
that such a table of operators is selected by the bank is exponentially
small.

Recall that each instance of stabilizer money is verified using a
classical certificate, which consists of an $m\times l$ table of
$n$ qubit Pauli group operators. The $(i,j)$th element of the table
is an operator \[
E_{ij}=(-1)^{b_{ij}}A_{1}^{ij}\otimes A_{2}^{ij}...\otimes A_{n}^{ij}\]
 where each $A_{k}^{ij}\in\{1,\sigma_{x},\sigma_{y},\sigma_{z}\}$
and $b_{ij}\in\{0,1\}$.

We will use one important property of the algorithm that generates
the table of Pauli operators: with the exception of the fact that
$-I^{\otimes n}$ cannot occur in the table, the distribution of the
tables is symmetric under negation of all of the operators.

The verification algorithm works by choosing, for each $i$, a random
$j\left(i\right)\in\left[m\right]$. The verifier then measures \begin{equation}
Q=\frac{1}{l}\sum_{i}I^{\otimes i-1}\otimes E_{i,j\left(i\right)}\otimes I^{\otimes m-i}.\label{eq:operatorQ_appendix}\end{equation}
 The algorithm accepts iff the outcome is greater than or equal to
$\frac{\epsilon}{2}$. Note that measuring the operator $Q$ is equivalent
to measuring the operator $E_{i,j(i)}$ for each register $i\in[l]$
and then averaging the results, since the measurements on different
registers commute.

To better understand the statistics of the operator $Q$, we consider
measuring an operator $E_{i,j(i)}$ on a state $\rho_{i}$, where
$j(i)\in[m]$ is chosen uniformly at random. The total probability
$p_{1}(\rho_{i})$ of obtaining the outcome $+1$ is given by\begin{align*}
p_{1}(\rho_{i}) & =\frac{1}{m}\sum_{j=1}^{m}\Tr\left[\left(\frac{1+E_{i,j(i)}}{2}\right)\rho_{i}\right]\\
 & =\frac{1+\Tr\left[H^{(i)}\rho_{i}\right]}{2}\end{align*}
 where (for each $i\in[l]$) we have defined the Hamiltonian \[
H^{(i)}=\frac{1}{m}\sum_{j=1}^{m}E_{ij}.\]

We use the algorithm described below to independently generate an
$n$ qubit mixed state $\rho_{i}$ on each register $i\in[l]$. At
least $\nicefrac{1}{4}$ of these states $\rho_{i}$ (w.h.p.\ over
the choice of the table $E$) will have the property that \begin{equation}
\Tr[H^{(i)}\rho_{i}]\geq\frac{1}{4\sqrt{m}}+O\left(\frac{1}{m^{2}}\right)\label{quarter}\end{equation}
 and the rest have \begin{equation}
p_{1}(\rho_{i})\geq\frac{1}{2}-O\left(\frac{1}{m}\right)\label{3quarters}\end{equation}
 which implies that \[
\E_{i}p_{1}(\rho_{i})\geq\frac{1}{2}+\frac{1}{8\sqrt{m}}+O\left(\frac{1}{m^{2}}\right).\]

We use the state \[
\rho=\rho_{1}\otimes\rho_{2}\otimes...\otimes\rho_{l}\]
 as our forged quantum money. If the verifier selects $j\left(i\right)$
at random and measures $Q$ (from equation \ref{eq:operatorQ_appendix}),
then the expected outcome is at least $\frac{1}{4}(\frac{1}{4\sqrt{m}}+O(\frac{1}{m^{2}}))+\frac{3}{4}O(\frac{1}{m}$),
and the probability of an outcome less than $\frac{1}{32\sqrt{m}}$
(for $\epsilon\le\frac{1}{16\sqrt{m}}$, the verifier can only reject
if this occurs) is exponentially small for $m$ sufficiently large
by independence of the registers. Therefore the forged money state
$\rho$ is accepted by Aaronson's verifier with probability that is
exponentially close to 1 if $\epsilon\leq\frac{1}{16\sqrt{m}}$.

Before describing our algorithm to generate the states $\left\{ \rho_{i}\right\} $,
we must understand the statistics (in particular, we consider the
first two moments) of each $H^{\left(i\right)}$ on the fully mixed
state $\frac{I}{2^{n}}$. We will assume that, for $j\ne k$, $E_{ij}\ne E_{ik}$.
We also assume that the operators $\pm I\otimes I\otimes I...\otimes I$
do not appear in the list. Both of these assumptions are satisfied
with overwhelming probability. The first and second moments of $H^{(i)}$
are\[
\Tr\left[H^{(i)}\frac{I}{2^{n}}\right]=0\]
 and \begin{align}
 & \quad\Tr\left[\left(H^{\left(i\right)}\right)^{2}\frac{I}{2^{n}}\right]\\
 & =2^{-n}\Tr\left[\frac{1}{m^{2}}\sum_{j}\left(E_{i,j}\right)^{2}+\frac{1}{m^{2}}\sum_{j\ne k}E_{i,j}E_{i,k}\right]\nonumber \\
 & =\frac{1}{m}.\label{eq:2ndmoment}\end{align}
 Now let us define $f_{i}$ to be the fraction (out of $2^{n}$) of
the eigenstates of $H^{(i)}$ which have eigenvalues in the set $[\frac{1}{2\sqrt{m}},1]\cup[-1,-\frac{1}{2\sqrt{m}}]$.
Since the eigenvalues of $H^{(i)}$ are bounded between $-1$ and
$1$, we have\[
\Tr\left[\left(H^{\left(i\right)}\right)^{2}\frac{I}{2^{n}}\right]\leq f_{i}+(1-f_{i})\frac{1}{4m}.\]
 Plugging in equation \ref{eq:2ndmoment} and rearranging we obtain
\[
f_{i}\geq\frac{3}{4m-1}.\]
 We also define $g_{i}$ to be the fraction of eigenstates of $H^{(i)}$
that have eigenvalues in the set $[\frac{1}{2\sqrt{m}},1]$. The distribution
(for any fixed $i$) of $E_{ij}$ as generated by the bank is symmetric
under negation of all the $E_{ij}$, so with probability at least
$\nicefrac{1}{2}$ over the choice of the operators in the row labeled
by $i$, the fraction $g_{i}$ satisfies\begin{equation}
g_{i}\geq\frac{3}{8m-2}.\label{eq:g_i}\end{equation}
 We assume this last inequality is satisfied for at least $\nicefrac{1}{4}$
of the indices $i\in[l]$, for the particular table $E_{ij}$ that
we are given. The probability that this is not the case is exponentially
small in $l$.

Ideally, we would generate the states $\rho_{i}$ by preparing the
fully mixed state, measuring $H^{\left(i\right)},$ keeping the result
if the eigenvalue is at least $\frac{1}{2\sqrt{m}},$ and otherwise
trying again, up to some appropriate maximum number of tries. After
enough failures, we would simply return the fully mixed state. It
is easy to see that outputs of this algorithm would satisfy eq.~\ref{eq:bruteforce-low-epsilon-cond}
with high probability.

Unfortunately, we cannot efficiently measure the exact eigenvalue
of an arbitrary Hermitian operator, but we can use phase estimation,
which gives polynomial error using polynomial resources. In appendix~\ref{sub:Review-of-the}
we review the phase estimation algorithm which is central to our procedure
for generating the states $\rho_{i}$. In section~\ref{sub:Preparing-the-states},
we describe an efficient algorithm to generate $\rho_{i}$ using phase
estimation and show that the resulting states, even in the presence
of errors due to polynomial-time phase estimation, are accepted by
the verifier with high probability, assuming that the table $E_{ij}$
has the appropriate properties.

\subsection{Procedure to Generate $\rho_{i}$\label{sub:Preparing-the-states}}

We now fix a particular value of $i$ and, for convenience, define
$H=\frac{1}{4}H^{(i)}$ so that all the eigenvalues of $H$ lie in
the interval $[-\frac{1}{4},\frac{1}{4}]$. We denote the eigenvectors
of $H$ by $\{|\psi_{j}\rangle\}$ and write \[
e^{2\pi iH}|\psi_{j}\rangle=e^{2\pi i\phi_{j}}|\psi_{j}\rangle.\]
The positive eigenvalues of $H$ map to phases $\phi_{j}$ in the
range$[0,\frac{1}{4}]$ and negative eigenvalues of $H$ map to $[\frac{3}{4},1]$.

We label each eigenstate of $H$ as either {}``good'' or {}``bad''
according to its energy. We say an eigenstate $|\psi_{j}\rangle$
is good if $\phi_{j}\in[\frac{1}{16\sqrt{m}},\frac{1}{4}].$ Otherwise
we say it is bad (which corresponds to the case where $\phi_{j}\in[0,\frac{1}{16\sqrt{m}})\cup[\frac{3}{4},1]).$

We use the following algorithm to produce a mixed state $\rho_{i}$.
\begin{enumerate}
\item Set $k=1$. 
\item Prepare the completely mixed state $\frac{I}{2^{n}}$. In our analysis
of this step, we will imagine that we have selected an eigenstate
$|\psi_{p}\rangle$ of $H$ uniformly at random, which yields identical
statistics.
\item Use the phase estimation circuit to measure the phase of the operator
$e^{2\pi iH}$. Here the phase estimation circuit (see appendix~\ref{sub:Review-of-the})
acts on the original $n$ qubits in addition to $q=r+\lceil\log(2+\frac{2}{\delta})\rceil$
ancilla qubits, where we choose \begin{align*}
r & =\lceil\log(20m)\rceil\\
\delta & =\frac{1}{m^{3}}.\end{align*}

\item Accept the resulting state (of the $n$ qubit register) if the measured
phase $\phi^{\prime}=\frac{z}{2^{q}}$ is in the interval $[\frac{1}{8\sqrt{m}}-\frac{1}{20m},\frac{1}{2}].$
In this case stop and output the state of the first register. Otherwise
set $k=k+1$. 
\item If $k=m^{2}+1$ then stop and output the fully mixed state. Otherwise
go to step 2. 
\end{enumerate}
We have chosen the constants in steps 3 and 4 to obtain an upper bound
on the probability $p_{b}$ of accepting a bad state in a particular
iteration of steps 2, 3, and 4:\begin{align*}
p_{b} & =\Pr\left(|\psi_{p}\rangle\text{ is bad and you accept }\right)\\
 & \leq\Pr\left(\text{accept given that }|\psi_{p}\rangle\text{ was bad}\right)\\
 & \leq\Pr\left(|\phi_{p}-\phi^{\prime}|>\frac{1}{16\sqrt{m}}-\frac{1}{20m}\right)\\
 & \leq\Pr\left(|\phi_{p}-\phi^{\prime}|>\frac{1}{20m}\right)\\
 & \leq\delta\mbox{ by equation \ref{eq:errorbound}.}\end{align*}

Above, we considered two cases depending on whether or not the inequality
\ref{eq:g_i} is satisfied for the register $i$. We analyze the algorithm
in these two cases separately.

\subsubsection*{Case 1: Register $i$ satisfies inequality~\ref{eq:g_i}}

In this case, choosing $p$ uniformly, \begin{equation}
\Pr\left(\frac{1}{4}\geq\phi_{p}\geq\frac{1}{8\sqrt{m}}\right)\geq\frac{3}{8m-2}\label{eq:pr_ineq}\end{equation}
This case occurs for at least $\nicefrac{1}{4}$ of the indices $i\in[l]$
with all but exponential probability.

The probability $p_{g}$ that you pick a good state (in a particular
iteration of steps 2, 3, and 4) and then accept it is at least\begin{align*}
p_{g} & =\Pr\left(|\psi_{p}\rangle\text{ is good and you accept}\right)\\
 & \geq\Pr\left(\frac{1}{4}\geq\phi_{p}\geq\frac{1}{8\sqrt{m}}\text{ and you accept}\right)\\
 & =\Pr\left(\frac{1}{4}\geq\phi_{p}\geq\frac{1}{8\sqrt{m}}\right)\\
 & \qquad\times\Pr\left(\text{accept given }\frac{1}{4}\geq\phi_{p}\geq\frac{1}{8\sqrt{m}}\right)\\
 & \geq\Pr\left(\frac{1}{4}\geq\phi_{p}\geq\frac{1}{8\sqrt{m}}\right)(1-\delta)\\
 & \geq\frac{3}{8m-2}\left(1-\frac{1}{m^{3}}\right)\\
 & \geq\frac{1}{4m}\text{ , for m sufficiently large.}\end{align*}
Thus the total probability of outputting a good state is (in a complete
run of the algorithm)\begin{align}
 & \quad\Pr(\text{output a good state})\\
 & =\sum_{k=1}^{m^{2}}p_{g}(1-p_{g}-p_{b})^{k-1}\nonumber \\
 & =\frac{p_{g}}{p_{g}+p_{b}}\left(1-(1-p_{g}-p_{b})^{m^{2}}\right)\nonumber \\
 & \geq\frac{p_{g}}{p_{g}+p_{b}}\left(1-(1-p_{g})^{m^{2}}\right)\nonumber \\
 & \geq\frac{p_{g}}{p_{g}+\delta}\left(1-(1-p_{g})^{m^{2}}\right).\nonumber \\
 & \geq\frac{p_{g}}{p_{g}+\delta}\left(1-e^{-p_{g}m^{2}}\right)\label{eq:outputgood_prob-1}\\
 & \geq\frac{1}{1+\frac{4}{m^{2}}}\left(1-e^{-p_{g}m^{2}}\right)\text{ for m sufficiently large.}\nonumber \\
 & =1-O\left(\frac{1}{m^{2}}\right)\nonumber \end{align}
 So in this case, the state $\rho_{i}$ will satisfy\begin{align*}
 & \quad\Tr\left[H^{(i)}\rho_{i}\right]\\
 & \geq\Pr\left(\text{output a good state}\right)\frac{1}{4\sqrt{m}}\\
 & \quad-\left(1-Pr\left(\text{output a good state}\right)\right)\\
 & =\frac{1}{4\sqrt{m}}+O\left(\frac{1}{m^{2}}\right).\end{align*}

\subsubsection*{Case 2: Register $i$ does not satisfy inequality~\ref{eq:g_i}}

This case occurs for at most $\nicefrac{3}{4}$ of the indices $i\in[l]$
with all but exponentially small probability.

The probability of accepting a bad state for register $i$ at any
point is \begin{equation}
\Pr\left(\text{accept a bad state ever}\right)\le\sum_{k=1}^{m^{2}}\delta=\frac{1}{m}.\label{eq:failurebound}\end{equation}
 So the state $\rho_{i}$ which is generated by the above procedure
will satisfy\begin{align*}
 & \quad\Tr\left[H^{(i)}\rho_{i}\right]\\
 & \geq-\Pr\left(\text{accept a bad state ever}\right)\\
 & =-\frac{1}{m}.\end{align*}

We have thus shown that equation \ref{quarter} holds for all indices
$i$ which satisfy inequality \ref{eq:g_i} and that equation \ref{3quarters}
holds for the rest of the indices. As discussed above, this guarantees
(assuming at least $\nicefrac{1}{4}$ of the indices $i$ satisfy
inequality \ref{eq:g_i}) that our forged state $\rho=\rho_{1}\otimes\rho_{2}\otimes...\otimes\rho_{l}$
is accepted by the verifier with high probability if $\epsilon\leq\frac{1}{16\sqrt{m}}$.

\subsection{Review of the Phase Estimation Algorithm \label{sub:Review-of-the} }

In this section we review some properties of the phase estimation
algorithm as described in \citet{nielsen-chuang}. We use this algorithm
in appendix~\ref{sec:details-of-low-epsilon-attack} to measure the
eigenvalues of the operator $e^{2\pi iH}$. The phase estimation circuit
takes as input an integer $r$ and a parameter $\delta$ and uses
\[
q=r+\lceil log(2+\frac{2}{\delta})\rceil\]
 ancilla qubits. When used to measure the operator $e^{2\pi iH}$,
phase estimation requires as a subroutine a circuit which implements
the unitary operator $e^{2\pi iHt}$ for $t\leq2^{r}$, which can
be approximated efficiently if $2^{r}=poly(n)$. This approximation
of the Hamiltonian time evolution incurs an error which can be made
polynomially small in $n$ using polynomial resources (see for example
\citet{nielsen-chuang}). We therefore neglect this error in the remainder
of the discussion. The phase estimation circuit, when applied to an
eigenstate $|\psi_{j}\rangle$ of $H$ such that \[
e^{2\pi iH}|\psi_{j}\rangle=e^{2\pi i\phi_{j}}|\psi_{j}\rangle,\]
 and with the $q$ ancillas initialized in the state $|0\rangle^{\otimes q}$,
outputs a state \[
|\psi_{j}\rangle\otimes|a_{j}\rangle\]
 where $|a_{j}\rangle$ is a state of the ancillas. If this ancilla
register is then measured in the computational basis, the resulting
$q$ bit string $z$ will be an approximation to $\phi_{j}$ which
is accurate to $r$ bits with probability at least $1-\delta$ in
the sense that \begin{equation}
\Pr\left(\left|\phi_{j}-\frac{z}{2^{q}}\right|>\frac{1}{2^{r}}\right)\leq\delta.\label{eq:errorbound}\end{equation}
 In order for this algorithm to be efficient, we choose $r$ and $\delta$
so that $2^{r}=\poly(n)$ and $\delta=\frac{1}{\poly(n)}$.

\section{Insecurity of the Stabilizer Money for $\epsilon\geq\frac{c}{\sqrt{m}}$
\label{sec:details-for-large-epsilon}}

In this section, we will describe how to forge the Stabilizer Money
when the number of commuting measurements is at least $c\sqrt{m}$
for any constant $c>0$. We will consider each column of the table
separately. For the $i^{\text{th}}$ column, let $M=M_{i}$ be the
list of possible measurements for $\psi=\psi_{i}$, and let $K=K_{i}$
denote the set of commuting measurements that stabilize $\psi$. Set
$k=|K|$ and $m=|M|$. We will first consider the case $k>100\sqrt{m}$,
and we will then show how to reduce the case $k>c\sqrt{m}$ to this
case for any constant $c>0$. The algorithm we present has success
probability $4/5$ over the choice of the random measurements. We
have not attempted to optimize this probability, and it could be improved
with a more careful analysis.

We begin by casting our question as a graph problem. Let $G$ be a
graph whose vertices correspond to the $m$ measurements, and connect
vertices $i$ and $j$ if and only if the corresponding measurements
commute. The set $K$ now forms a clique, and we aim to find it.

In general, it is intractable to find the largest clique in a graph.
In fact, it is NP-hard even to approximate the size of the largest
clique within $n^{1-\epsilon}$, for any $\epsilon>0$~\citet{Zuckerman}.
However, if the graph is obtained by planting a clique of size $\epsilon\sqrt{m}$
in an (Erdös-Rényi) random graph drawn from $G(m,1/2)$, Alon, Krivelevich,
and Sudakov showed that one can find the clique in polynomial time
with high probability~\citet{AKS98}. Unfortunately, the measurement
graph $G$ is not drawn from $G(m,1/2)$, so we cannot directly apply
their result. However, we shall show that $G$ is sufficiently random
that a modified version of their approach can be made to go through.
The main tool that we use is to show that $G$ is $k$-wise independent
and that this is enough for a variant of the clique finding algorithm
to work. $k$ wise independent random graphs were studied by \citet{AN08},
although they were interested in other properties of them.

\subsection{Properties of the Measurement Graph}

\global\long\def\ip#1#2{\langle#1, #2\rangle}
 \global\long\def\0{\mathbf{0}}
 \global\long\def\F{\mathbb{F}}

To analyze $G$, it will be convenient to use a linear algebraic description
of its vertices and edges. Recall that any stabilizer measurement
on $n$ qubits can be described as a vector in $\mathbb{F}_{2}^{2n}$
as follows: 
\begin{itemize}
\item for $j\leq n$, set the $j^{\text{th}}$ coordinate to 1 if and only
if the operator restricted to the $j^{\text{th}}$ qubit is $X$ or
$Y$, and 
\item for $n<j\leq2n$, set the $j^{\text{th}}$ coordinate to 1 if and
only if the operator restricted to the $\left(j-n\right)^{\text{th}}$
qubit is $Y$ or $Z$.
\end{itemize}
For $v,w\in\F_{2}^{2n}$, let \[
\ip vw=v^{T}\left(\begin{array}{cc}
\0_{n} & I_{n}\\
I_{n} & \0_{n}\end{array}\right)w,\]
 where $I_{n}$ and $\0_{n}$ are the $n\times n$ identity and all-zeros
matrices, respectively. It is easy to check that the stabilizer measurements
corresponding to $v$ and $w$ commute if and only if $\ip vw=0$
(over $\F_{2}$).

Using this equivalence between Pauli group operators and vectors,
each vertex $u$ of the graph $G$ is associated with a vector $s_{u}$.
There is an edge between vertices $u$ and $v$ in $G$ if and only
if $\ip{s_{u}}{s_{v}}=0$. This means that the $2mn$ bits that encode
the vectors $\left\{ s_{u}\right\} $ also encode the entire adjacency
matrix of $G$. There are $m\left(m-1\right)/2$ possible edges in
$G$, so the distribution of edges in $G$ is dependent (generically,
$m\left(m-1\right)/2)>2mn$). Fortunately, this dependence is limited,
as we can see from the following lemma.

\begin{lemma}\label{ind} Let $v_{1},\ldots v_{t},u$ be measurements
such that $s_{v_{1}},\ldots s_{v_{t}},s_{u}$ are linearly independent,
and let $x_{1},\ldots,x_{t}\in\{0,1\}$ be arbitrary. Let $v$ be
a random stabilizer measurement such that $\ip{s_{v}}{s_{v_{i}}}=x_{i}$
for every $i$ and the vectors $s_{v_{1}},\dots,s_{v_{t}},s_{u},s_{v}$
are linearly independent. Then \[
\Pr(\ip{s_{v}}{s_{u}}=0)=1/2\pm O\left(\frac{1}{2^{2(n-t)}}\right).\]
 \end{lemma}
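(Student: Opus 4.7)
The plan is to view the conditions on $s_v$ as linear constraints over $\mathbb{F}_2^{2n}$, count the affine subspace they cut out, and then correct for the linear-independence side condition. The only genuinely nontrivial ingredient is the non-degeneracy of the symplectic form $\langle \cdot, \cdot\rangle$.

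First, I would note that because $\langle \cdot, \cdot\rangle$ is non-degenerate on $\mathbb{F}_2^{2n}$, the $t$ linear functionals $w \mapsto \langle w, s_{v_i}\rangle$ are linearly independent whenever the vectors $s_{v_1}, \ldots, s_{v_t}$ are. Consequently the set
\[
A = \{w \in \mathbb{F}_2^{2n} : \langle w, s_{v_i}\rangle = x_i \text{ for all } i=1,\ldots,t\}
\]
is an affine subspace of dimension $2n - t$, with $|A| = 2^{2n-t}$. Because $s_u$ is also linearly independent from $s_{v_1},\ldots,s_{v_t}$, the extra constraint $\langle w, s_u\rangle = 0$ is independent of the previous ones and cuts $A$ exactly in half, so $A_0 = \{w \in A : \langle w, s_u\rangle = 0\}$ has size $|A_0| = 2^{2n - t - 1}$.

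Second, I would handle the linear-independence condition on $s_v$. The vectors $w \in A$ forbidden by that condition are exactly those lying in $L = \mathrm{span}\{s_{v_1},\ldots,s_{v_t}, s_u\}$, a set of size at most $2^{t+1}$. The probability of interest is therefore
\[
\Pr(\langle s_v, s_u\rangle = 0) = \frac{|A_0| - |A_0 \cap L|}{|A| - |A \cap L|}.
\]
Since $|A_0 \cap L|, |A \cap L| \le |L| \le 2^{t+1}$ while $|A|, |A_0| \ge 2^{2n-t-1}$, expanding this ratio gives $\tfrac{1}{2} \pm O(2^{t+1}/2^{2n - t - 1}) = \tfrac{1}{2} \pm O(2^{-2(n-t)})$, which is the stated bound.

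The main point to be careful about is the first step: checking that the inner-product constraints really do cut the dimension down by the full $t+1$. This relies on non-degeneracy of the symplectic form, which guarantees that the map $w \mapsto (\langle w, s_{v_1}\rangle,\ldots,\langle w, s_{v_t}\rangle, \langle w, s_u\rangle)$ is surjective onto $\mathbb{F}_2^{t+1}$ whenever the $t+1$ vectors $s_{v_1},\ldots,s_{v_t},s_u$ are independent. Everything else is elementary counting. A minor cosmetic issue is whether the uniform measure on ``stabilizer measurements'' formally excludes the zero vector (corresponding to $\pm I^{\otimes n}$), but this alters counts by at most an additive constant and is comfortably absorbed into the $O(1/2^{2(n-t)})$ error.
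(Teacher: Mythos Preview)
Your proof is correct and follows essentially the same approach as the paper: count the affine subspace cut out by the $t$ symplectic constraints, halve it with the additional constraint from $s_u$, subtract off the at most $2^{t+1}$ vectors in the span to handle the linear-independence condition, and bound the resulting ratio. You are, if anything, slightly more careful than the paper in explicitly invoking non-degeneracy of the symplectic form to justify that the $t+1$ linear functionals are independent.
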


\begin{proof}

The vector $s_{v}\in\{0,1\}^{2n}$ is chosen uniformly at random from
the set of vectors satisfying the following constraints: 
\begin{enumerate}
\item For every $i$, we have $\ip{s_{v}}{s_{v_{i}}}=x_{i}$. 
 
\item The vectors $s_{v_{1}},\ldots s_{v_{t}},s_{u},s_{v}$ are linearly
independent. 

\end{enumerate}
Let $S_{0}$ denote the set of vectors that satisfy these constraints
and have $\ip{s_{v}}{s_{u}}=0$, and let $S_{1}$ be the set of vectors
that satisfy these constraints and have $\ip{s_{v}}{s_{u}}=1$. We
have \[
\Pr(\ip{s_{v}}{s_{u}}=0)=\frac{|S_{0}|}{|S_{0}+S_{1}|}.\]
The vectors $s_{v_{1}},\ldots s_{v_{t}},s_{u}$ are linearly independent,
so there are $2^{2n-t-1}$ solutions to the set of equations $\ip{s_{v}}{s_{u}}=1$
and $\ip{s_{v}}{s_{v_{i}}}=x_{i}$ for all $i$. This implies that
$|S_{1}|\le2^{2n-t-1}$.

Constraint 2 rules out precisely the set of vectors in the span of
$s_{v_{1}},\dots,s_{v_{t}},s_{u}$. This is a $(t+1)$-dimensional
subspace, so it contains $2^{t+1}$ points, and thus 
 $|S_{0}|\ge2^{2n-t-1}-2^{t+1}$. It follows that \begin{align*}
\Pr(\ip{s_{v}}{s_{u}}=0) & \ge\frac{2^{2n-t-1}-2^{t+1}}{2^{2n-t}-2^{t+1}}\\
 & =\frac{1}{2}-\frac{1}{2^{2n-2t}-1}\\
 & =\frac{1}{2}-O\left(\frac{1}{2^{2(n-t)}}\right).\end{align*}
 Repeating this argument gives the same bound for $\Pr(\ip{s_{v}}{s_{u}}=1)$,
from which the desired result follows. \end{proof}


\subsection{Finding Planted Cliques in Random Graphs}

\label{AKSsection} Our algorithm for finding the clique $K$ will
be identical to that of Alon, Krivelevich, and Sudakov~\citet{AKS98},
but we will need to modify the proof of correctness to show that it
still works in our setting. In this section, we shall give a high
level description of~\citet{AKS98} and explain the modifications
necessary to apply it to $G$. The fundamental difference is that
Alon et al.\ rely on results from random matrix theory that use the
complete independence of the matrix entries to bound mixed moments
of arbitrarily high degree, but we only have guarantees about moments
of degree $O(\log m)$. As such, we must adapt the proof to use only
these lower order moments.


Let $G(m,1/2,k)$ be a random graph from $G(m,1/2)$ augmented with
a planted clique of size $k$, and let $A$ be its adjacency matrix.
Let $\lambda_{1}\ge\lambda_{2}\ge\dots\ge\lambda_{m}$ be the eigenvalues
of $A$, and let $v_{1},\dots,v_{m}$ be the corresponding eigenvectors.
To find the clique, Alon et al.\ find the set $W$ of vertices with
the $k$ largest coordinates in $v_{2}$. They then prove that, with
high probability, the set of vertices that have at least $3k/4$ neighbors
in $W$ precisely comprise the planted clique.

The analysis of their algorithm proceeds by analyzing the largest
eigenvalues of $A$. They begin by proving that the following two
bounds hold with high probability: 
\begin{itemize}
\item $\lambda_{1}\geq\left(\frac{1}{2}+o(1)\right)m,$ and 
\item $\lambda_{i}\leq\left(1+o(1)\right)\sqrt{m}$ for all $i\geq3$. 
\end{itemize}
The second of these bounds relies heavily on a result by Füredi and
Komlós about the eigenvalues of matrices with independent entries.
The independence assumption will not apply in our setting, and thus
we will need to reprove this bound for our graph $G$. This is the
main modification that we will require to the analysis of~\citet{AKS98}.

They then introduce a vector $z$ that has $z_{i}=(m-k)$ when vertex
$i$ belongs to the planted clique, and has $z_{i}=-k$ otherwise.
Using the above bounds, they prove that, when one expands $z$ in
the eigenbasis of $A$, the coefficients of $v_{1},v_{3},\dots,v_{m}$
are all small compared to $||z||$, so $z$ has most of its norm coming
from its projection onto $v_{2}$. This means that $v_{2}$ has most
of its weight on the planted clique, which enables them to prove the
correctness of their algorithm.

Other than the bound on $\lambda_{3},\dots,\lambda_{m}$, the proof
goes through with only minor changes. The bound on $\lambda_{1}=(1+o(1))m/2$,
follows from a simple analysis of the average degree, which holds
for the measurement graph as well. The rest of their proof does not
make heavy use of the structure of the graph. The only change necessary
is to replace various tail bounds on the binomial distribution and
Chebyschev bounds with Markov bounds. These weaker bounds result in
a constant failure probability and weaker constants, but they otherwise
do not affect the proof. (For brevity, we omit the details.) As such,
our remaining task is to bound $\lambda_{i}$ for $i\geq3$.

\subsection{Bounding $\lambda_{3},\dots,\lambda_{m}$}

\label{rand} To bound the higher eigenvalues of the adjacency matrix,
Alon et al.\ apply the following theorem of Füredi and Komlós~\citet{FK81}:
\begin{lemma} Let $R$ be a random symmetric $m\times m$ matrix
in which $R_{i,i}=0$ for all $i$, and the other entries are independently
set to $\pm1$ with $\Pr(R_{i,j}=1)=\Pr(R_{i,j}=-1)=\frac{1}{2}$.
The largest eigenvalue of $R$ is at most $m+O(m^{1/3}\log m)$ with
high probability. \end{lemma}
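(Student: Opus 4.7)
The plan is the classical trace method. Since $R$ is symmetric with largest eigenvalue $\lambda_1$, we have $\lambda_1^{2k} \le \mathrm{tr}(R^{2k})$ for any positive integer $k$, and Markov's inequality gives
\[\Pr(\lambda_1 > t) \;\le\; t^{-2k}\,\mathbb{E}\bigl[\mathrm{tr}(R^{2k})\bigr].\]
The problem reduces to estimating the moment $\mathbb{E}[\mathrm{tr}(R^{2k})]$ and then optimizing the choice of $k$ and $t$.

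First I would expand
\[\mathrm{tr}(R^{2k}) \;=\; \sum_{i_1,\ldots,i_{2k}\in[m]} R_{i_1 i_2}R_{i_2 i_3}\cdots R_{i_{2k} i_1},\]
a sum over closed walks of length $2k$ on the complete graph $K_m$ (the diagonal terms vanish since $R_{ii}=0$). Because the off-diagonal entries are independent, symmetric, with $R_{ij}^2=1$, a walk contributes $+1$ in expectation if every distinct edge is traversed an even number of times, and $0$ otherwise. Hence $\mathbb{E}[\mathrm{tr}(R^{2k})]$ counts the number of such \emph{even} closed walks.

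Next I would bound this count using the Füredi--Komlós classification of even closed walks by the combinatorial shape of their edge-use pattern. Such a walk spans at most $k+1$ vertices, with equality precisely for \emph{tree-like} walks that traverse some spanning tree of a $(k+1)$-vertex set, using each tree edge exactly twice. The number of such tree-like walks is at most $C_k \cdot m(m-1)\cdots(m-k) \le C_k\, m^{k+1}$, where $C_k$ is the $k$-th Catalan number (the number of rooted Dyck walks on a tree with $k$ edges). Walks spanning fewer vertices necessarily carry a ``defect'' --- an edge used four or more times, a cycle in the spanned subgraph, or a coincidence among the first-visit times --- and a combinatorial encoding of each such walk by its defect locations shows that every defect costs a factor of order $1/m$ (with some polynomial-in-$k$ overhead). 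Summing these contributions yields
\[\mathbb{E}\bigl[\mathrm{tr}(R^{2k})\bigr] \;\le\; C_k\, m^{k+1}\bigl(1 + O(k^6/m)\bigr)\]
uniformly for $k$ up to roughly $m^{1/6}$.

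Finally, using $C_k \le 4^k$ and choosing $k \asymp m^{1/3}$, Markov's inequality at a threshold $t$ of the claimed form pushes $\Pr(\lambda_1 > t)$ below any inverse polynomial in $m$. The main obstacle is the combinatorial bookkeeping for non-tree-like walks: one must introduce a canonical encoding of each closed walk --- for instance, its sequence of first-visit times together with markers for its ``irregular'' steps --- and verify that each irregularity really costs a factor of order $1/\sqrt{m}$ in the count, so that the cumulative error $O(k^6/m)$ stays $o(1)$ when $k$ is pushed as high as $m^{1/3}$. This careful multi-case enumeration of defect patterns is the heart of the Füredi--Komlós argument and is what prevents the proof from being a one-line exercise.
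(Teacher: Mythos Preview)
The paper does not actually prove this lemma; it quotes it as a theorem of F\"uredi and Koml\'os and then invokes only a consequence of their argument (the trace bound $\Tr\,\mathbb{E}[R^{t}]\le m^{t/2+1}4^{t}$ for $t<m^{1/3}$) when proving the analogous statement for the measurement-graph matrix $B$. Your outline \emph{is} the F\"uredi--Koml\'os trace-method argument, so you are reconstructing the cited proof rather than diverging from anything the paper does.

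One slip worth flagging: you say the moment bound
\[
\mathbb{E}\bigl[\mathrm{tr}(R^{2k})\bigr]\;\le\;C_k\,m^{k+1}\bigl(1+O(k^{6}/m)\bigr)
\]
holds ``uniformly for $k$ up to roughly $m^{1/6}$,'' and then propose ``choosing $k\asymp m^{1/3}$.'' These are inconsistent; with an error term of size $O(k^{6}/m)$ you can only take $k$ of order $m^{1/6}$. Fortunately that is already enough. With $k\asymp m^{1/6}$ and threshold $t=2\sqrt{m}(1+\delta)$, Markov gives $\Pr(\lambda_1>t)\lesssim m(1+\delta)^{-2k}$, which is $o(1)$ as soon as $\delta\gg m^{-1/6}\log m$, i.e.\ $t-2\sqrt{m}\gg m^{1/3}\log m$. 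So the target bound follows with the smaller $k$, and the mention of $k\asymp m^{1/3}$ is a slip rather than a real gap.
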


We will prove a slightly weaker variant of this lemma for random measurement
graphs. Let $B$ be a matrix that is generated by picking $m$ random
stabilizer measurements $M_{1},\ldots,M_{m}$ and setting $B_{i,i}=0$,
$B_{i,j}=1$ if $M_{i}$ commutes with $M_{j}$, and $B_{i,j}=-1$
if $M_{i}$ anticommutes with $M_{j}$. The main technical result
of this section will be the following: \begin{thm} \label{largesteigenval}
With high probability, the largest eigenvalue of $B$ is at most $10\sqrt{m}.$
\end{thm}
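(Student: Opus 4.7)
The approach is the Füredi--Komlós trace method, adapted to handle the limited independence of the entries of $B$ provided by Lemma \ref{ind}.

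The plan is to pick a positive integer $k$ (to be set to $\lceil \log_2 m \rceil$ at the end) and start from the standard bound $\lambda_{\max}(B)^{2k} \leq \Tr(B^{2k}) = \sum_{i_1,\ldots,i_{2k}\in[m]} B_{i_1 i_2} B_{i_2 i_3} \cdots B_{i_{2k} i_1}$, where the sum ranges over closed walks of length $2k$ on $[m]$. In the Wigner setting where the $B_{ij}$ are i.i.d.\ uniform on $\{\pm 1\}$, a closed walk contributes $1$ to the expectation if every edge is traversed an even number of times and $0$ otherwise; the number of walks of the former type is at most $C_k \cdot m(m-1)\cdots(m-k) \leq 4^k m^{k+1}$, with $C_k$ the $k$-th Catalan number. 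Markov's inequality would then give $\Pr(\lambda_{\max}(B) \geq 10\sqrt{m}) \leq 4^k m^{k+1}/(100^k m^k) = m \cdot 25^{-k}$, which is $o(1)$ for $k = \lceil \log_2 m \rceil$ since $\log_2 25 > 4$.

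The entries of $B$ are not independent, so the key step is to show that the expectation of $\prod_t B_{i_t i_{t+1}}$ for any closed walk of length $2k$ is close to its value in the fully independent model. Any such walk involves at most $2k+1 = O(\log m)$ distinct vertices and hence at most $2k$ distinct stabilizer vectors $s_u \in \F_2^{2n}$; by a union bound these vectors are linearly independent with probability $1 - O(m^{2k+1}/2^{n - 2k}) = 1 - o(1)$ provided $n$ is polynomially larger than $\log m$. Iteratively applying Lemma \ref{ind} along a revealing order for the $\binom{2k}{2}$ commutation bits shows that their joint distribution is within total variation $O(k^2 \cdot 2^{-2(n-2k)})$ of a product of unbiased Bernoullis. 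Consequently, walks with some edge traversed an odd number of times contribute at most $m^{2k} \cdot O(2^{-2(n-2k)})$ in total, which is negligible, while even-multiplicity walks still contribute at most $(1+o(1)) \cdot 4^k m^{k+1}$. Hence $\E[\Tr(B^{2k})] \leq (1+o(1)) \cdot 4^k m^{k+1}$ and the Markov computation above goes through.

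The main obstacle is the bookkeeping for the dependence: the cumulative error from odd-multiplicity walks, of which there are up to $m^{2k}$, must be dominated by the exponentially small per-walk deviation $2^{-2(n-2k)}$ supplied by Lemma \ref{ind}. This forces $2k \ll n$ and so restricts us to $k = O(\log m)$, which is why the constant $10$ obtained here is weaker than the optimal $2+o(1)$ of the true Wigner semicircle bound; the weaker constant is sufficient for the analysis in Section \ref{AKSsection}.
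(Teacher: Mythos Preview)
Your proposal is correct and takes essentially the same approach as the paper: both apply the F\"uredi--Koml\'os trace method with exponent $t=\Theta(\log m)$, invoke Lemma~\ref{ind} to show that $\E\Tr(B^{t})$ matches $\E\Tr(R^{t})$ up to an exponentially small additive error (the paper packages this comparison as the separate Lemma~\ref{lem:expect}), and then finish with Markov's inequality. The only cosmetic differences are that the paper quotes the bound $\Tr\E(R^{t})\le m^{t/2+1}4^{t}$ directly from~\cite{FK81} rather than re-deriving it via Catalan numbers, and chooses $t=10\log m$ in place of your $2\lceil\log_{2}m\rceil$.
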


Alon et al.\citet{AKS98} show how to transform a bound on the eigenvalues
of $R$ into a bound on the third largest eigenvalue of $A$. This
reduction does not depend on the properties of $G$, and it works
in our case when applied to $B$. This gives a bound of $10\sqrt{m}$
on the third largest eigenvalue of the adjacency matrix of $G$.

The proof of Theorem~\ref{largesteigenval} will rely on the following
lemma, which shows that the entries of small powers of the matrix
$B$ have expectations quite close to those of $R$. \begin{lemma}\label{lem:expect}
For $t\le O(\log m)$, \[
\mathbb{E}\left[(B^{t})_{i,j}\right]=\mathbb{E}\left[(R^{t})_{i,j}\right]\pm\frac{1}{2^{\Omega(n-t)}}.\]
 \end{lemma}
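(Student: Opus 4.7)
}
The plan is to expand both $\mathbb{E}[(B^t)_{i,j}]$ and $\mathbb{E}[(R^t)_{i,j}]$ as sums over length-$t$ walks from $i$ to $j$ and compare them walk-by-walk. For a walk $w = (i = i_0, i_1, \ldots, i_t = j)$, let $k_{a,b}(w)$ be the number of times the undirected edge $\{a,b\}$ is used. Since $R$ has independent mean-zero entries on the off-diagonal, $\mathbb{E}[R_w] = 1$ if every $k_{a,b}$ is even and the walk uses no diagonal entry, and $\mathbb{E}[R_w] = 0$ otherwise. For $B$, using the identification $B_{a,b} = (-1)^{\langle s_a, s_b\rangle}$ of Section~B.1, the walk product satisfies
\[
\prod_{\ell=0}^{t-1} B_{i_\ell, i_{\ell+1}} = (-1)^{\sum_{\{a,b\}} k_{a,b}(w) \langle s_a, s_b\rangle},
\]
which equals $+1$ identically when every $k_{a,b}$ is even. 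So the ``good'' walks contribute the same to $B^t$ as to $R^t$, and the whole estimate reduces to controlling, walk-by-walk, the expectation over the random vectors $\{s_c\}$ for the ``bad'' walks — those with at least one edge $\{a,b\}$ of odd multiplicity.

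\paragraph{}
For a bad walk, pick a vertex $a$ incident to an odd-multiplicity edge. Condition on all $s_c$ with $c \ne a$; let $w^\star = \sum_{b \in N_{\mathrm{odd}}(a)} s_b \in \mathbb{F}_2^{2n}$ where $N_{\mathrm{odd}}(a)$ is the set of neighbors $b$ with $k_{a,b}$ odd. Then the conditional expectation of $\prod_\ell B_{i_\ell, i_{\ell+1}}$ collapses, via the standard character identity on $\mathbb{F}_2^{2n}$, to $\pm \mathbb{E}_{s_a}\bigl[(-1)^{\langle s_a, w^\star\rangle}\bigr]$, which vanishes unless $w^\star = 0$. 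Since $w^\star$ is a fixed linear combination of at most $t$ of the other $s_c$'s, a direct counting argument (essentially the one used to prove Lemma~\ref{ind}) shows that $\Pr[w^\star = 0] \le 2^{-2n + O(t)}$: one gets to pick any subset of $|N_{\mathrm{odd}}(a)| - 1$ of the involved vectors freely, and then the last one is pinned down to a single value in $\mathbb{F}_2^{2n}$. (Strictly speaking, the $s_c$'s are drawn from the set of stabilizer vectors with no forbidden linear relations among them, but since the forbidden set has size $2^{O(t)}$, this only changes the bound by a constant factor.) Hence each bad walk contributes at most $2^{-2n + O(t)}$ in absolute value to $\mathbb{E}[(B^t)_{i,j}] - \mathbb{E}[(R^t)_{i,j}]$.

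\paragraph{}
Summing over the at most $m^{t-1}$ walks of length $t$ gives an overall error of at most $m^{t-1} \cdot 2^{-2n + O(t)}$. Using $t \le O(\log m)$ so that $m^{t-1} \le 2^{O(t \log m)} = 2^{O(t)}$ once $m = O(2^{t/\log m})$ is absorbed into constants (and noting that the regime of interest has $n$ much larger than $t$), the total error is $2^{-\Omega(n-t)}$, as claimed. The main step requiring care is the character-sum/linear-independence argument in the previous paragraph: one must check that the conditioning forced by Lemma~\ref{ind}'s ``linearly independent'' requirement does not degrade the $w^\star = 0$ probability estimate, and that the combinatorics of which vertex to condition on (and which subset of edges is odd) is robust when many vertices are repeated in the walk. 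Once this is done, the rest is bookkeeping that mirrors the standard trace-method proof of the F\"uredi--Koml\'os bound that Theorem~\ref{largesteigenval} ultimately rests on.
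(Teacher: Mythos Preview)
Your approach matches the paper's: expand $(B^t)_{i,j}$ and $(R^t)_{i,j}$ as sums over length-$t$ walks, observe that walks in which every edge has even multiplicity contribute identically, and bound each remaining walk via the symplectic $\mathbb{F}_2$ structure. Your character-sum packaging (condition on all $s_c$ with $c\ne a$, then average over $s_a$ to get $\pm\mathbb{E}_{s_a}[(-1)^{\langle s_a,w^\star\rangle}]=\pm\mathbb{1}[w^\star=0]$, then bound $\Pr[w^\star=0]$) is a minor and arguably cleaner variant of the paper's iterated use of Lemma~\ref{ind}, but it is the same idea.

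The one genuine slip is the arithmetic in your final paragraph. The claim $m^{t-1}\le 2^{O(t\log m)}=2^{O(t)}$ is false: for $t=\Theta(\log m)$ one has $m^{t-1}=2^{\Theta(\log^2 m)}$, not $2^{O(t)}$, and the parenthetical ``$m=O(2^{t/\log m})$ is absorbed into constants'' does not make sense (with $t=c\log m$ this would force $m=O(1)$). The conclusion nevertheless holds because the ambient regime is $m=\mathrm{poly}(n)$, so $m^{t-1}=2^{O(\log^2 n)}=2^{o(n)}$ and hence $m^{t-1}\cdot 2^{-2n+O(t)}=2^{-\Omega(n)}\le 2^{-\Omega(n-t)}$; you should invoke that assumption explicitly in place of the incorrect equality. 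The paper's own write-up is similarly terse at this step (it just asserts $n^{O(\log m)}/2^{2(n-t)}=2^{-\Omega(n)}$), but it does not make the false intermediate claim.
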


\begin{proof}{[}Proof of Lemma~\ref{lem:expect}{]} With high probability,
for every subset of vertices $U$ such that $|U|<t\leq O(\log m)$,
we have that the set $\{s_{u}\,|\, u\in U\}$ is linearly independent
over $\F_{2}$. We condition the rest of our analysis on this high
probability event.

We begin by expanding the quantity we aim to bound: \begin{align}
\mathbb{E}\left[(B^{t})_{i,j}\right] & =\mathbb{E}\left[\sum_{\ell_{2},\ldots\ell_{t}}\prod_{\alpha=1}^{t+1}B_{\ell_{\alpha},\ell_{\alpha+1}}\right]\nonumber \\
 & =\sum_{\ell_{2},\ldots\ell_{t}}\mathbb{E}\left[\prod_{\alpha=1}^{t+1}B_{\ell_{\alpha},\ell_{\alpha+1}}\right]\label{expansion}\end{align}
 where we take set $\ell_{1}=i$ and $\ell_{t+1}=j$, and we sum over
all possible values of the indices $\ell_{2},\dots,\ell_{t}$.

We break the nonzero terms in this summation into two types of monomials:
those in which every matrix element appears an even number of times,
and those in which at least one element appears an odd number of times.
In the former case, the monomial is the square of a $\pm1$-valued
random variable, so we have \[
\mathbb{E}\left[\prod_{\alpha}B_{\ell_{\alpha},\ell_{\alpha+1}}\right]=\mathbb{E}\left[\prod_{\alpha}R_{\ell_{\alpha},\ell_{\alpha+1}}\right]=1,\]
 and it suffices to focus on the latter case. By the same reasoning,
we can drop any even number of occurrences of an element, so it suffices
to estimate the expectations of monomials of degree at most $t$ in
which all of the variables are distinct.

Any such monomial in the $R_{i,j}$ has expectation zero by symmetry,
so we need to provide an upper bound on terms of the form $\prod_{\alpha=1}^{q}B_{\ell_{\alpha},\ell_{\alpha+1}}$,
where $q\le t\le r$ and each matrix element appears at most once.

Consider the probability that $B_{q-1,q}=1$, where we take the probability
over the choice of the $2n$ bit string $s_{q}$, given that for any
$\alpha\le q$, we have $B_{\alpha,\alpha+1}=x_{\alpha}$ for some
value $x_{\alpha}$. We are computing this expectation conditioned
on the the $s_{u}$ being linearly independent, so we can apply Lemma~\ref{ind}.
This gives\begin{align*}
 & \mathbb{E}\prod_{\alpha=1}^{q}B_{\ell_{\alpha},\ell_{\alpha+1}}\\
= & \sum_{x_{1},\ldots x_{q-1}}\Pr(\ip{s_{\ell_{\alpha}}}{s_{\ell_{\alpha+1}}}=x_{\alpha})\\
 & \quad\times\Big\{\Pr(\ip{s_{q-1}}{s_{q}}=1|x_{1},\ldots x_{q-1})\\
 & \quad-\Pr(\ip{s_{q-1}}{s_{q}}=-1|x_{1},\ldots x_{q-1})\Big\}\\
\le & O\left(\frac{1}{2^{2(n-t)}}\right)\cdot\sum_{x_{1},\ldots x_{q-1}}\Pr(\ip{s_{\ell_{\alpha}}}{s_{\ell_{\alpha+1}}}=x_{\alpha})\\
= & O\left(\frac{1}{2^{2(n-t)}}\right).\end{align*}
 There are $n^{O(\log m)}$ terms in the summation of eq.~, and we
have shown that each term is at most $O\left(1/{2^{2(n-t)}}\right)$,
so we obtain \[
\mathbb{E}\left[(B^{t})_{i,j}\right]\leq O\left(\frac{n^{O(\log m)}}{2^{2(n-t)}}\right)=\frac{1}{2^{\Omega(n)}},\]
 as desired. \end{proof}

We can now use this lemma to prove Theorem~\ref{largesteigenval}.
\begin{proof}{[}Proof of Theorem~\ref{largesteigenval}{]} Consider
a random matrix $R$, with $R_{i,i}=0$ and each other cell distributed
independently at random according to $\Pr(R_{i,j}=1)=\Pr(R_{i,j}=-1)=\frac{1}{2}$.
Lemma $3.2$ of~\citet{FK81} shows that, for $t<m^{1/3}$, \[
\Tr(\mathbb{E}(R^{t}))=m^{t/2+1}4^{t}.\]
 For $t\ge10\log m$, Lemma \ref{lem:expect} implies that \begin{align*}
\Tr(\mathbb{E}(B^{t}))=\Tr(\mathbb{E}(R^{t}))\pm\frac{1}{2^{\Omega(n-t)}}\\
=m^{t/2+1}4^{t}\pm\frac{1}{2^{\Omega(n-t)}}.\end{align*}
 Let $\lambda_{1}\ge\dots\ge\lambda_{n}$ be the eigenvalues of $B$.
For any even $t$, one has that \[
\Tr B^{t}=\sum_{i}\lambda_{i}^{t}\ge\lambda_{1}^{t}.\]
 Applying this relation with $t=10\log m$ gives: \begin{align*}
\Pr(\lambda_{1}\ge10\sqrt{m}) & =\Pr\left(\lambda_{1}^{t}\ge(10\sqrt{m})^{t}\right)\\
\le(10\sqrt{m})^{-t}\mathbb{E}\lambda_{1}^{t} & \le(10\sqrt{m})^{-t}m^{t/2+1}4^{t}\\
 & =m\left(\frac{4}{10}\right)^{t}<1/m^{4}.\qedhere\end{align*}
 \end{proof}

Plugging the bound from Theorem~\ref{largesteigenval} into the argument
from the section~\ref{AKSsection} and computing the correct constants
yields that the algorithm finds a planted clique in $G$ of size at
least $100\sqrt{m}$ with probability $4/5$.

\subsection{Finding Cliques of Size $c\sqrt{m}$}

To break stabilizer money for all $\epsilon\geq\frac{c}{\sqrt{m}}$,
we extend our algorithm to find cliques of size $c\sqrt{m}$ for any
$c>0$. In~\citet{AKS98}, Alon et al.\ show how to bootstrap the
above scheme to work for any $c$.

The procedure used by Alon et al.\ is to iterate over all sets of
vertices of size $\log(100/c)$, and, for each such set $S$, to try
to find a clique in the graph $G_{S}$ of the vertices that are connected
to all of the vertices in $S$.

When $S$ is in the planted clique, $G_{S}$ also contains the clique.
However, $|G_{S}|\approx c|G|/100$, as most of the vertices that
are outside the clique are removed. As $G_{S}$ behaves like a random
graph with the same distribution as the original graph but with a
planted clique of size $100\sqrt{|G_{S}|}$, one can find it using
the second largest eigenvector.

To use the same algorithm in our case, we apply Lemma \ref{lem:expect}
with parameter $k+\log100/c$. This shows that, up to a small additive
error, the expected value of the $k^{\text{th}}$ power of the adjacency
matrix of $G_{S}$ behaves like the expected value of the $k^{\text{th}}$
power of the adjacency matrix of a random graph, which was all that
we used in the proof.
\end{document}